\newtheorem{lemma}{Lemma}
\newtheorem{proposition}{Proposition}
\newtheorem{corollary}{Corollary}
\theoremstyle{definition}
\newtheorem{example}{Example}
\newcommand{\argmin}{\operatornamewithlimits{argmin}}
\providecommand{\norm}[1]{\lVert#1\rVert}
\title{\setstretch{1} Recovering Network Structure from Aggregated Relational Data using Penalized Regression}
\author{Hossein Alidaee\thanks{Department of Strategy, Kellogg School of Management, Northwestern University. E-mail: hossein.alidaee@kellogg.northwestern.edu.} 
\and Eric Auerbach\thanks{Department of Economics, Northwestern University. E-mail: eric.auerbach@northwestern.edu.} 
\and Michael P.\ Leung\thanks{Department of Economics, University of Southern California. E-mail: leungm@usc.edu. Research supported by NSF grant SES-1755100.}
}
 \date{\parbox{\linewidth}{\centering%
  \today\endgraf}} 
\begin{document}
\maketitle
\begin{abstract} \setstretch{1}\noindent

  Social network data can be expensive to collect. \cite{breza2017using} propose aggregated relational data (ARD) as a low-cost substitute that can be used to recover the structure of a latent social network when it is generated by a specific parametric random effects model. Our main observation is that many economic network formation models produce networks that are effectively low-rank. As a consequence, network recovery from ARD is generally possible without parametric assumptions using a nuclear-norm penalized regression. We demonstrate how to implement this method and provide finite-sample bounds on the mean squared error of the resulting estimator for the distribution of network links. Computation takes seconds for samples with hundreds of observations. Easy-to-use code in R and Python can be found at {\tt \href{https://github.com/mpleung/ARD}{https://github.com/mpleung/ARD}}.
\end{abstract}

\section{Introduction}

Social network data can be expensive to collect. A complete network census can be prohibitively costly and, for this reason, often only obtainable for small populations. \cite{breza2017using} propose a simple alternative, which is to collect aggregated relational data (ARD). ARD consists of responses to questions of the form ``How many of your friends have trait $k$?'', which \cite{breza2017using} argue can be substantially cheaper to collect than a full network census. They propose a Bayesian estimation procedure for recovering the structure of the social network from ARD under a specific latent space model of network formation \citep{hrh2002,mccormick2015latent}.

We think that the idea of using ARD as a substitute for network data is an important step to reduce financial barriers to empirical research in network economics. The purpose of our paper is to broaden the applicability of ARD. We show that tools from the high-dimensional statistics literature can be employed to recover the network structure using ARD without imposing a particular parametric model of network formation such as the latent space model. Our estimator can be computed in seconds for samples with hundreds of observations using an accelerated gradient descent algorithm due to \cite{ji2009accelerated}. We provide easy-to-use code in Python and R and an example that walks through its use at {\tt \href{https://github.com/mpleung/ARD}{https://github.com/mpleung/ARD}}.

Our estimator is motivated by the observation that the task of recovering a network from ARD can be written as a high-dimensional linear regression problem. Without any assumptions on the network, this problem is ill-posed. Our strategy for network recovery is based on the assumption that the latent social network has a low-dimensional structure, namely that its distribution has low effective rank.\footnote{In this paper, a random network with low effective rank is one in which the ratio of the nuclear norm to the Frobenius norm of the expected adjacency matrix is close to 1. See Section \ref{simp} for a formal motivation.} This assumption is inherent in many models used to describe social networks in the statistics and economics literature. Examples include the latent space models used by \cite{breza2017using}, stochastic blockmodels \citep[]{holland1983stochastic}, and degree heterogeneity models \citep{graham2017econometric}. 

When a network has low effective rank, we show that the distribution of network links may be recovered from ARD using a nuclear-norm penalty \citep[see generally][Chapter 10]{wainwright2015high}. We derive a finite-sample bound on the mean squared error of our estimator for the distribution of network links by adapting arguments from \cite{negahban2011estimation}. The bound implies that the mean squared error of the estimator is decreasing in the number of traits used in the ARD and increasing in the effective rank of the network. This result can be found in Section \ref{sMSE} and is, to our knowledge, original. 

Our paper makes two main contributions to the empirical literature on network estimation. The first contribution is highlighting this connection between the ARD problem and the nuclear-norm penalization literature. Other recent applications of nuclear-norm penalization in economics include \cite{athey2018matrix}, \cite{beyhum2019square}, and \cite{moon2018nuclear}, although the underlying structure of their estimation problems differ from the ARD network recovery problem in important ways. The second contribution is to demonstrate how to implement the nuclear-norm penalization in practice to a general audience. In service of this second contribution, we have tried to make the exposition of this paper and the supporting materials as nontechnical as possible. 

\section{ARD as a Regression Problem}

Following \cite{breza2017using}, we consider a population of $N_2$ agents connected in a network. The ideal but infeasible network census is conducted by interviewing every pair of agents $i, j \in \{1, \ldots, N_2\}$ and asking if they have a social connection, which requires $N_{2}^{2}$ questions. 
The innovation of \cite{breza2017using} is to instead collect ARD. To collect ARD, the authors first identify a set of $K$ traits. Traits include characteristics like whether an agent has been arrested, remarried, vaccinated, etc. The authors then conduct two surveys. The first survey is a census of all $N_2$ agents and the authors ask the agents to report their traits. The second survey is conducted with a subsample of $N_{1}$ agents ($N_1 \leq N_2$) and the authors ask the agents to report the number of connections they have to agents in the population with each trait. That is, they ask agents how many friends they have who have been arrested, how many friends they have who have been remarried, how many friends they have who have been vaccinated, etc. This alternative to the full network census only requires $(N_{1}+N_{2})K$ questions where $K < N_{1}$ and is easier to implement logistically. 

Let $g_{ij}^{*}$ be an indicator for whether agents $i$ and $j$ would report a link if interviewed. Mathematically, ARD is represented by
\begin{equation}
  y_{ki} = \sum_{j=1}^{N_{2}}g_{ij}^{*}w_{kj} \label{ard}
\end{equation}
where for a collection of $K$ traits, $y_{ki}$ measures the number of agent $i$'s connections that have trait $k$ and $w_{kj}$ is an indicator for whether agent $j$ has trait $k$. The goal is to use the ARD $y_{ki}$ and trait data $w_{kj}$ to learn about the network links, $g^{*}_{ij}$.

A key assumption is that the interviewed agents use precisely the relationships of interest $g_{ij}^*$ to construct their responses to the ARD $y_{ki}$. That is, when asked ``How many of your friends have trait $k$?'' respondents count exactly those connections given by $g^*_{ij}$. 

\subsection{Regression Formulation}

Let the matrix $Y$ denote the $K\times N_{1}$ collection of ARD $y_{ki}$, $W$ denote the $K \times N_{2}$ collection of traits $w_{kj}$, and $G^{*}$ denote the  $N_{2} \times N_{1}$ collection of links $g^{*}_{ij}$. Equation \eqref{ard} can be written in matrix form
\begin{equation*}
  Y = WG^*.
\end{equation*}

\noindent The problem of recovering $G^{*}$ from $Y$ and $W$ can be viewed as that of finding an $N_{2} \times N_{1}$ matrix $G$ that minimizes the squared-loss
\begin{equation}
  \frac{1}{2} ||Y - WG||_{F}^2 = \frac{1}{2} \sum_{i=1}^{N_{1}}\sum_{k=1}^{K}\left(y_{ki} - \sum_{j=1}^{N_{2}}w_{kj}g_{ij}\right)^{2}. \label{infreg}
\end{equation}
Since it is not generally possible to learn about the existence of a link between two agents that were both not interviewed about ARD in this setting, we take $G$ to be $N_{2}\times N_{1}$ and not $N_{2}\times N_{2}$. \cite{breza2017using} essentially impute the links of agents not surveyed for ARD, which is also straightforward to do here \citep*[see for instance][]{chatterjee2015matrix}.

If the $N_{2} \times N_{2}$ matrix $W'W$ has full rank, then there exists a unique solution to (\ref{infreg})
\begin{equation*}
  G^{*} = (W'W)^{-1}W'Y
\end{equation*}
and $G^{*}$ can be perfectly recovered. 
Of course, the assumption that $W'W$ is full rank requires that the number of traits used in the ARD survey exceeds the size of the population ($K \geq N_{2}$), which defeats the whole point of using ARD as a low-cost alternative to a network census. When $W'W$ is not invertible, (\ref{infreg}) is ill-posed, and $G^{*}$ cannot in general be recovered using $Y$ and $W$ without additional information.  

Our idea is to exploit the fact that many network formation models of interest, including the latent space model of \cite{breza2017using}, produce networks that have an underlying low-dimensional structure in the sense that the expected adjacency matrix has {\bf low effective rank}. The next subsection explains this observation. Then in Section \ref{estimator}, we propose a new estimator by adding a penalty to the objective function \eqref{infreg} that allows us to exploit the low-dimensional structure and learn about $G^{*}$. 

\subsection{Motivation for the Low Rank Assumption}

Loosely speaking, the premise of the high-dimensional regression literature is that it is often possible to recover the parameters of a model like (\ref{ard}) by solving a version of a problem like (\ref{infreg}) if the minimizer is known to have a certain low-dimensional structure. For instance, if $G^{*}$ is a sparse matrix (i.e. very few pairs of agents would report a connection if interviewed) then the network may be recovered using the LASSO, elastic net, or a related technology. Recent examples in network economics include \cite{barigozzi2018nets,belloni2016quantile,de2018recovering,manresa2013estimating,roseidentification}.  

\cite{breza2017using} do not assume $G^{*}$ is sparse. They instead specify the link formation rule
\begin{equation}
  g_{ij}^{*} = \mathbbm{1}\{\eta_{ij} \leq \nu_{i} + \nu_{j} + \zeta z_{i}'z_{j}\}, \label{nfm}
\end{equation}
where $\nu_{i}$ is agent $i$'s random effect and $z_{i}$ is agent $i$'s position on the surface of the $p$-sphere (both distributed iid with von Mises-Fisher marginals on the hypersphere), $\eta_{ij}$ is an iid mean-zero logistic error, and $\zeta$ is a scalar. This model has a low-dimensional structure, as discussed formally below. To see this intuitively, note that the model admits a random utility interpretation in which the expected transferable utility $i$ and $j$ receive from forming a link is given by 
\begin{equation*}
  u^{*}_{ij} = \nu_{i} + \nu_{j} + \zeta z_{i}'z_{j}
\end{equation*}

\noindent and two agents only form a link if the realized utility exceeds $1/2$. The expected utility matrix $U^{*}$ formed from the $N_{2}\times N_{1}$ collection of $u_{ij}^{*}$ has rank $p+2$ because $z_i$ is $p$-dimensional, while $\nu_i$ is 1-dimensional. Note that \cite{breza2017using} suggest choosing $p=2$ in practice. The important observation is that this rank is low relative to the sample size $N_1$. It is this low-dimensional structure that our proposed methodology exploits. For a more formal discussion of this low-dimensional structure, see Section \ref{simp}.

\subsection{Adding a Nuclear-Norm Penalty}\label{estimator}

Unfortunately, a low-rank structure does not typically allow us to recover $G^*$ from $Y$ and $W$ exactly. However, it is still possible to learn about the distribution of $G^{*}$ by adding a nuclear-norm penalty to the least-squares objective \eqref{infreg}. Intuitively, the nuclear-norm penalty encourages the solution to have low rank, analogous to how the $\ell_1$ penalty for LASSO encourages sparse solutions \citep[][Example 9.8]{wainwright2015high}. We note that the distribution of network links is exactly what is recovered by \cite{breza2017using}, and we echo their motivation that in many applications recovering the distribution of $G^{*}$ is sufficient to address the research question at hand.

First we define the estimand of interest, the distribution of $G^*$. We assume $G^*$ is realized according to the following nonparametric model of network formation, which substantially generalizes \eqref{nfm}:
\begin{equation}
  g_{ij}^* = \mathbbm{1}\{\eta_{ij} \leq f(\alpha_i, \alpha_j)\} \mathbbm{1}\{i\neq j\}, \label{model}
\end{equation}

\noindent where $\eta_{ij}$ is iid with unknown marginal distribution $F_{\eta}$, $\{\alpha_i\}_{i=1}^n$ are unknown vector-valued agent fixed effects, and $f$ is an unknown function. Let $m_{ij}^{*} = E[g_{ij}^*] = F_{\eta}(f(\alpha_{i},\alpha_{j}))$ and $M^{*}$ be the $N_{2} \times N_{1}$ matrix with $ij$th entry $m_{ij}^{*}$. The entries of the matrix $M^{*}$ describe the conditional probability that two agents would report a link if surveyed in a network census given their fixed effects.  The entries of the matrix $M^{*}$ parametrize the distribution of $G^{*}$ and are our object of interest.

To estimate $M^*$, we propose the followng penalized version of \eqref{infreg}
\begin{align}
  \hat M &= \argmin_{M\in \mathcal{M}} \hat Q(M), \quad\text{where} \label{freg}\\
  \hat Q(M) &= \frac{1}{2} ||Y-WM||_{F}^2 + \lambda ||M||_{nuc} \nonumber\\
	    &= \frac{1}{2} \sum_{i=1}^{N_{1}}\sum_{k=1}^{K}\left(y_{ki} - \sum_{j=1}^{N_{2}}m_{ij}w_{kj}\right)^{2} + \lambda \sum_{t=1}^{N_{1}}\sigma_{t}(M), \nonumber
\end{align}

\noindent $\lambda$ is a tuning parameter to be chosen by the researcher, $\sigma_t(M)$ is the $t$th singular value of $M$, and $\mathcal{M}$ is a set of matrices.\footnote{For example, $\mathcal{M}$ might be the set of all matrices, in which case we allow for directed and self-links. Alternatively, it might be the set of symmetric matrices with zeros on the diagonal, in which case $\hat M$ is the distribution of an undirected network with no self-links.} The {\em nuclear norm} $||M||_{nuc}$ is large relative to the Frobenius norm $||M||_F$ when the rank of $M$ is large. Hence, adding the nuclear norm penalty encourages the solution $\hat M$ to have low rank. In practice, we recommend choosing the penalty parameter
\begin{equation}
  \lambda = 2\left(\sqrt{N_1} + \sqrt{N_{2}} + 1\right)\left(\sqrt{N_{2}}+\sqrt{K}\right). \label{penalty}
\end{equation}
Details for computing $\hat M$ are given in the next subsection.

As discussed in Section \ref{simp}, in large samples, $\hat M$ closely approximates $M^*$ under certain conditions. The estimate can therefore be used to simulate the distribution of $G^{*}$, used as an input into a second stage model, or used to estimate various network statistics based on $G^{*}$ such as the degree distribution or clustering coefficient. We refer the reader to \cite{breza2017using} for specific applications in development economics.

\subsection{Implementation Details}\label{simp2}

The nuclear norm penalized problem \eqref{freg} can be rewritten as a semidefinite programming problem and solved using tools that are standard in the optimization literature \citep[see generally][]{boyd2004convex}.\footnote{More precisely, this is the case for the unconstrained problem where $\mathcal{M}$ is the set of all matrices.} In practice this formulation is usually computationally intractable. We instead use a fast accelerated gradient descent (AGD) algorithm due to \cite{ji2009accelerated}, which is also used in the simulations of \cite{negahban2011estimation}. We modify the output of the algorithm to impose the (optional) constraint that the network is undirected with no self-links.\footnote{The approximation guarantees in \cite{ji2009accelerated} are for the unconstrained problem, but our simulations in Section \ref{ssims} show that our modification to impose the constraint performs well in practice.} A complete description of the algorithm can be found in Appendix \ref{AGDalgo}.

In our simulations in Section \ref{ssims}, the algorithm rapidly computes $\hat M$ for populations with hundreds of agents. For example, when $N_1=N_2=500$, it computes an estimate for the latent space model (Example \ref{e1}) in about five seconds on a laptop with a 2.6 GHz processor and 8 GB RAM. The Bayesian estimation procedure of \cite{breza2017using} can be computationally costly if the dimension of the latent space or the number of Markov chain draws required for the convergence of the MCMC algorithm is large. The AGD algorithm proposed here does not depend on these parameters.

\section{Why the Estimator Works}\label{simp}

The basic idea behind our estimator \eqref{freg} is that the nuclear-norm penalty encourages the solution to have a nuclear norm $||\hat M||_{nuc}$ close to its Frobenius norm $||\hat M||_F$, which yields a matrix with small effective rank. Under certain conditions, $\hat M$ will closely approximate $M^*$. This is shown formally in Proposition \ref{mainresult} of Appendix \ref{stheory}, whose proof applies a result due to \cite*{negahban2011estimation}. 

Before discussing the proposition, let us define what we mean by effective rank. Recall that the {\em rank} of $M^{*}$ is given by the number of nonzero singular values of $M^{*}$. The {\em effective rank} of $M^{*}$ is the squared ratio of its nuclear norm to its Frobenius norm. Formally, let $\sigma_{t}(M^{*})$ be the $t$th singular value of $M^{*}$ for $t \in \{1, \dots, N_1\}$. Then the effective rank of $M^{*}$ is given by 
\begin{equation*}
ER(M^{*}) = \left(\frac{||M^{*}||_{nuc}}{||M^{*}||_{F}}\right)^2 = \frac{\left(\sum_{t=1}^{N_{1}}\sigma_{t}(M^{*})\right)^{2}}{\sum_{t=1}^{N_{1}}\sigma_{t}(M^{*})^{2}}.
\end{equation*}
This ratio gives a measure of matrix rank because the numerator is always larger than the denominator and the two are only similar in magnitude when most of the spectral values of $M^{*}$ are close to zero. That is, $ER(M^{*})$ is only close to $1$ when $M^{*}$ is well-approximated by a low-rank matrix.

For many popular choices of $F_{\eta}$ and $f$, $M^{*}$ has small effective rank when $\alpha_{i}$ is relatively low-dimensional. In such cases, it is possible to estimate $M^*$ using $Y$ and $W$ using our proposed estimator. We next provide three examples popular in practice, which also form the basis of our simulations in Section \ref{ssims}. To simplify the exposition, we take $N_{1} =N_{2} = n$.

\begin{example}[Latent Space Model]\label{e1}
One way to interpret the \cite{breza2017using} model is as a variation on the latent space model of \cite{hrh2002} where
  \begin{equation*}
  m_{ij}^* = F_{\eta}(\nu_i + \nu_j - \norm{z_i-z_j}_{2}),
  \end{equation*}
\noindent  $\norm{\cdot}_{2}$ is the Euclidean norm, $z_i \in \mathbb{R}^p$, and $F_{\eta}$ is the logistic distribution function. Table \ref{effrank} displays the effective rank of this latent space model for $\{\nu_i\}_{i=1}^n \stackrel{iid}\sim \mathcal{N}(0,1)$, $\{z_i\}_{i=1}^n \stackrel{iid}\sim \mathcal{U}([0,1]^2)$, and various values of $n$. 
\end{example}
\vspace{5mm}

\begin{example}[Random Dot Product Graph]\label{e2}
The random dot product graph model is a popular class of models in the social networks literature \citep{athreya2017statistical,young2007random}. A simple example is
  \begin{equation*}
    m_{ij}^* = U_i^{1/2}U_j^{1/2}
  \end{equation*}
  \noindent where $\{U_i\}_{i=1}^n \stackrel{iid}\sim \mathcal{U}([0,1])$. Table \ref{effrank} displays the effective rank of this random dot product graph model for various values of $n$.
\end{example}
\vspace{5mm}

\begin{example}[Stochastic Block Model]\label{e3}
  The stochastic block model is widely studied in the statistics literature to evaluate community detection algorithms \citep[see generally][]{abbe2017community,rohe2011spectral}. Agents are assigned one of $L$ possible types. Let $z_{il}$ be an indicator for whether agent $i$ has type $l$. The probability that agents form links is then given by 
  \begin{equation*}
  m_{ij}^{*} = \sum_{l_{1}=1}^{L}\sum_{l_{2}=1}^{L}z_{il_{1}}z_{jl_{2}}\theta_{l_{1}l_{2}}
  \end{equation*}
where $\theta_{l_{1}l_{2}}$ is the probability that an agent with type $l_{1}$ forms a link with an agent with type $l_{2}$. Table \ref{effrank} displays the effective rank of this stochastic blockmodel for $\theta_{l_{1}l_{2}} = .3$ if $l_{1} \neq l_{2}$,  $\theta_{l_{1}l_{2}} = .7$ if $l_{1} = l_{2}$, $L = 5$ equally sized groups, and various values of $n$. 
\end{example}

Simulation evidence given in Table \ref{effrank} shows that the effective ranks of networks generated from the three examples are small.

\begin{table}[ht]
\centering
\captionsetup{justification=centering}
\caption{Effective Ranks}
\begin{threeparttable}
\begin{tabular}{lrrrrrr}
\toprule
$n$ &  50  &  100 &  200 &  300 &  400 &  500 \\
\midrule
LSM & 2.50 & 2.70 & 2.84 & 2.92 & 2.97 & 3.00 \\
RDP & 1.97 & 1.99 & 1.99 & 2.00 & 2.00 & 2.00 \\
SBM & 3.15 & 3.27 & 3.33 & 3.36 & 3.37 & 3.37 \\
\bottomrule
\end{tabular}
\begin{tablenotes}[para,flushleft]
  \footnotesize $n = N_1 = N_2$. Cells are averages over 500 simulations. LSM $=$ latent space model, RDP $=$ random dot product graph, SBM $=$ stochastic block model. 
\end{tablenotes}
\end{threeparttable}
\label{effrank}
\end{table}

\subsection{Mean Squared Error}\label{sMSE}

We adapt arguments from \cite{negahban2011estimation} to derive a finite-sample bound on the mean-squared error of $\hat{M}$. This can be found in Appendix \ref{stheory}. In large samples and under certain assumptions, the bound can be well approximated by the following simple relationship
\begin{equation*}
  \frac{1}{N_{1}N_{2}}\sum_{i=1}^{N_{1}}\sum_{j=1}^{N_{2}}\left(\hat{m}_{ij} - m_{ij}^{*}\right)^{2} \leq C\times\frac{ER(M^{*})}{K}
\end{equation*}
where $C$ is a constant. That is, the mean squared error is eventually bounded by the ratio of the effective rank of $M^{*}$ over the number of traits used to construct the ARD. 

This bound matches the analogous (but fundamentally different) result for the matrix regression of \cite*{negahban2011estimation} (see their discussion after Corollary 3) and their intuition is as follows. If $M^{*}$ has rank $R$ then it can be described with $(N_{1}+N_{2})R$ parameters. To learn these parameters, ARD contains exactly $N_{1}K$ observations. Supposing $N_{2}/N_{1}$ is bounded, the ratio of the two gives our effective sample size which is, intuitively, the number of observations available to estimate each parameter. It is this ratio that fundamentally determines our bound. To be clear, we expect a similar rate of convergence for any procedure that uses the $N_{1}\times K$ dimensional ARD to learn the $N_{2}\times R$ parameters of a latent space model. 

We remark that the optimal mean squared error for $\hat{M}$ when $G^*$ is observed (i.e. $K = N_{2}$) is on the order of $\frac{\ln(N_{2})}{N_{2}}$ \citep[see for instance][]{gao2015rate}. It seems reasonable to us that convergence at a $K$ rate instead of a $N_{2}$ rate is the price to pay for using the relatively low-dimensional ARD to substitute for high-dimensional network data. 

If the goal of the researcher is to use $\hat{M}$ as a substitute for $M$ in a second-stage estimation procedure (for example, to construct estimates of network statistics to include in a linear regression model), then it may be the case that $K$ need not be taken to be too large for the estimation error of these network statistics around their population analogs to be unimportant. This is the premise of the literature on semiparametric estimation \cite*[see generally][]{powell1994estimation}. In the simulations below we consider $K = \sqrt{N_1}$ which is consistent with choosing an ARD survey with about ten traits to recover the structure of a network with 100 agents. Note that the villages in \cite{banerjee2013diffusion} contain 223 households on average. In the cost savings exercise in section 4 of \cite{breza2017using}, they consider a 30 percent sample, which corresponds to $N_1 \approx 67$. Consequently, we would only survey about $K = 8$ traits in practice in this setting. 

\subsection{Simulation Results}\label{ssims}

We compute $\hat{M}$ for the three models for $M^*$ from Examples \ref{e1}--\ref{e3}, which take $N_1=N_2=n$. We construct $W$ as a $K\times n$ matrix of iid Bernoulli$(0.5)$ random variables, where $K$ equals $\sqrt{n}$ rounded to the nearest integer, following the discussion in the previous subsection. Table \ref{simresults} displays the mean-squared error $n^{-2} \sum_{i=1}^n \sum_{j=1}^n (M_{ij}^* - \hat M_{ij})^2$. Even for relatively small values of $n$ and $K$ this error is close to zero and generally decreases with the network size $n$. 

\begin{table}[ht]
\centering
\captionsetup{justification=centering}
\caption{Mean-Squared Error}
\begin{threeparttable}
\begin{tabular}{lrrrrrr}
\toprule
$n$  &      50 &     100 &     200 &     300 &     400 &     500 \\
\midrule
LSM & 0.04334 & 0.03209 & 0.02914 & 0.02819 & 0.02718 & 0.02685 \\
RDP & 0.03793 & 0.02436 & 0.02072 & 0.01917 & 0.01757 & 0.01687 \\
SBM & 0.05559 & 0.04255 & 0.03908 & 0.03787 & 0.03677 & 0.03616 \\
\bottomrule
\end{tabular}
\begin{tablenotes}[para,flushleft]
  \footnotesize $n = N_1 = N_2$. Cells are averages over 500 simulations. LSM $=$ latent space model, RDP $=$ random dot product graph, SBM $=$ stochastic block model. The number of traits is $\sqrt{n}$ (rounded). The penalty is \eqref{penalty}.
\end{tablenotes}
\end{threeparttable}
\label{simresults}
\end{table}

\section{Conclusion}

Our purpose in writing this paper is to illustrate how nuclear-norm penalized least squares can be used to recover the structure of a latent network using ARD. We adapt arguments from \cite{negahban2011estimation} and demonstrate how in many cases the distribution of network links can be recovered in a nonparametric frequentist framework. We agree with \cite{breza2017using} that there are many open econometric and practical questions about how ARD ought to be collected and how to formally estimate and make inferences about the underlying parameters of a network formation model. We also think this is an important area for future econometric work. 

\footnotesize

\bibliographystyle{aer}
\bibliography{literature}

\appendix
\section{Estimation Algorithm}\label{AGDalgo}

\begin{algorithm}[ht]
  \SetKwInput{Input}{Input}\SetKwInput{Output}{Output}
  \SetKwProg{Fn}{def}{\string:}{}
  \DontPrintSemicolon

  \Input{$Y$, $W$, $\lambda$, $\varepsilon$ (desired error), $M_0$ (initial guess for $\hat M$)}
  \Output{$\hat M$}
  \BlankLine

  $\mathtt{err} \leftarrow 1$
  
  $\alpha \leftarrow 1$

  $L \leftarrow \sigma (W'W)$ \tcp*[f]{$\sigma(\cdot) = $ largest singular value}

  $M_\text{prev} \leftarrow \mathtt{symmetrize}(M_0)$
  
  $Z \leftarrow M_\text{prev}$

  \BlankLine
  
  \While{$\mathtt{err} > \varepsilon$}{
    $M \leftarrow \mathtt{gradientStep}(Y, W, \lambda, L, Z)$

    $\alpha_\text{prev} \leftarrow \alpha$

    $\alpha \leftarrow \big(1 + (1 + 4(\alpha_\text{prev}^2))^{1/2}\big)/2$

    $Z \leftarrow W + ((\alpha_\text{prev} - 1)/\alpha) * (M - M_\text{prev})$

    $\mathtt{err} \leftarrow \norm{M_\text{prev} - M}$ 

    $M_\text{prev} \leftarrow M$
  }
  
  \Return{$\mathtt{symmetrize}(M)$} \tcp*[f]{for undirected network with no self-links}

  \BlankLine
  \BlankLine

  \Fn{$\mathtt{gradientStep}(Y, W, \lambda, Z)$}{
    $\Delta \leftarrow X'X Z - X'Y$

    $C \leftarrow Z - \Delta / L$

    $U\Sigma V' = \mathtt{SVD}(C)$ \tcp*[f]{singular value decomposition}

    $\Sigma_\lambda \leftarrow \mathtt{diag}(\max\{\Sigma_{ii}-\lambda, 0\})$

    \Return{$U\Sigma_\lambda V'$}
  }
  
  \caption{Modified Accelerated Gradient Descent}
  \label{aAGD}
\end{algorithm}

To minimize \eqref{freg}, we use the accelerated gradient descent method of \cite{ji2009accelerated} stated in Algorithm \ref{aAGD}. This method is directly applicable to \eqref{freg} if the minimization is over all $N_2 \times N_1$ matrices $M$. In the network setting, however, $M$ needs to be non-negative. Furthermore, the network is often undirected without self-links. We impose these constraints by appropriately modifying the output at the end of their method.

Algorithm \ref{aAGD} two functions $\mathtt{SVD}(C)$ and $\mathtt{symmetrize}(M)$. The former outputs the standard singular value decomposition of a matrix $C$. The latter $\mathtt{symmetrize}(M)$ that replaces the first $N_1\times N_1$ submatrix of $M$ with a symmetric version with zero diagonals according to the procedure described in the next paragraph. This is only an optional step. If $M$ is directed and/or has self-links, this step can be suitably modified.
 
The function $\mathtt{symmetrize}(M)$ modifies $M$ as follows. Consider the topmost $N_1 \times N_1$ submatrix of $M$ in \eqref{freg} that consists of only the first $N_1$ columns. We assume this corresponds to $(m_{ij}^*\colon i,j \in \{1, \dots, N_1\})$. This is just a matter of constructing $Y$ and $W$ properly so that the columns are properly ordered in this way. We want this submatrix to be symmetric with zero diagonals. Thus, consider a candidate solution $\tilde M$ at any gradient descent step of the algorithm. We first replace every negative entry in $\tilde M$ with zero. Then we transform the topmost $N_1\times N_1$ submatrix by replacing the $ij$th entry with $(\tilde M_{ij} + \tilde M_{ji})/2$ for all $i,j \in \{1, \dots, N_1\}$ with $i\neq j$ and replacing $\tilde M_{ii}$ with zero with all $i$. This results in a symmetric $N_1\times N_1$ submatrix with zero diagonals, as desired. 

\section{Proof of Claims and Other Details}\label{stheory}

In this section we bound the mean squared error of $\hat{M}$ from problem (\ref{freg}).  

\begin{proposition}\label{mainresult}
  Suppose the entries of $W$ are independently distributed, and
  \begin{equation*}
    \nu = \min_{j \in \{1, \dots, N_2\}} \frac{1}{K} \sum_{k=1}^K E[W_{kj}] (1 - E[W_{kj}]) > 0.
  \end{equation*}
  Assume $G^{*}$ is drawn from \eqref{model} with entries mutually independent from $W$. If the penalty parameter satisfies
  \begin{equation*}
    \lambda \geq 2(\sqrt{N_{1}}+\sqrt{N_{2}} + 1)(\sqrt{N_{2}} +\sqrt{K}),
  \end{equation*}
  then with probability at least $1- N_{2}^{2}\exp(-K\nu^{2}/8)-\exp(-(\sqrt{N_{2}} +\sqrt{K})/2)$,
  \begin{equation*}
    \frac{||\hat{M}-M^{*}||_{F}}{||M^{*}||_{F}} \leq\sqrt{\frac{2048 \times \lambda \times ER(M^{*})}{\nu \times ||M^{*}||_{nuc} \times K}}
  \end{equation*}
\end{proposition}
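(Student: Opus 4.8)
The plan is to adapt the analysis of the nuclear-norm penalized estimator in \cite{negahban2011estimation} to this design, in which $W$ enters by left multiplication and the effective ``noise'' is the centered adjacency matrix $\Delta := G^* - M^*$ (bounded, mean zero, and — conditional on the fixed effects — with independent entries, independent of $W$). First I would record the basic inequality coming from optimality of $\hat M$: writing $D := \hat M - M^*$ and using $Y = WG^* = WM^* + W\Delta$, the inequality $\hat Q(\hat M) \le \hat Q(M^*)$ rearranges to
\begin{equation*}
  \tfrac12\|WD\|_F^2 \;\le\; \langle W'W\Delta,\, D\rangle \;+\; \lambda\big(\|M^*\|_{nuc} - \|\hat M\|_{nuc}\big).
\end{equation*}
From this I would extract two facts. (i) Using the Hölder inequality $\langle W'W\Delta, D\rangle \le \|W'W\Delta\|_{op}\,\|D\|_{nuc}$ together with decomposability of the nuclear norm at $M^*$, one obtains a ``cone'' constraint on $D$ as soon as $\lambda$ is at least the threshold in the statement; because $M^*$ is assumed only to have small \emph{effective} rank (not small rank), no exactly low-rank part is peeled off, and the cone bound takes the form $\|D\|_{nuc} \lesssim \|M^*\|_{nuc}$ — which is exactly why $\|M^*\|_{nuc}$ survives into the final bound. (ii) On the same event, the right-hand side is $\lesssim \lambda\|M^*\|_{nuc}$, giving the prediction bound $\|WD\|_F^2 \lesssim \lambda\|M^*\|_{nuc}$. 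The two factors $(\sqrt{N_1}+\sqrt{N_2}+1)$ and $(\sqrt{N_2}+\sqrt K)$ defining $\lambda$ are, up to constants, operator-norm bounds for random matrices with bounded independent entries (the first for the $N_2\times N_1$ matrix $\Delta$, the second for $W$), and the failure probability $\exp(-(\sqrt{N_2}+\sqrt K)/2)$ comes from this matrix-concentration step.

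Second I would establish a restricted strong convexity bound $\|WD\|_F^2 \ge \tfrac12\,\nu K\,\|D\|_F^2$ for $D$ in the cone, with high probability. The deterministic core is that independence of the entries of $W$ gives
\begin{equation*}
  E[W'W] \;=\; \bar W'\bar W + \mathrm{diag}_j\!\Big(\textstyle\sum_{k}\mathrm{Var}(W_{kj})\Big) \;\succeq\; \mathrm{diag}_j\!\Big(\textstyle\sum_{k}\mathrm{Var}(W_{kj})\Big) \;\succeq\; \nu K\, I_{N_2},
\end{equation*}
so the expected curvature is at least $\nu K$ in every direction. Transferring this to the realized $W'W$ requires $\tfrac1K W'W$ to concentrate about its mean; a union bound over the $O(N_2^2)$ entries of $\tfrac1K W'W$, each controlled by a Hoeffding bound at scale $\asymp\nu$, produces the $N_2^2\exp(-K\nu^2/8)$ term, and the restriction of $D$ to (near) low-rank matrices is what makes the residual fluctuation negligible relative to $\nu K$.

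Third, on the intersection of the two events, combining (ii) with the curvature bound gives $\tfrac12\,\nu K\,\|\hat M - M^*\|_F^2 \le \|WD\|_F^2 \lesssim \lambda\|M^*\|_{nuc}$, hence $\|\hat M - M^*\|_F^2 \le C\,\lambda\|M^*\|_{nuc}/(\nu K)$, and tracking the numerical constants through the cone bound and the factor-$\tfrac12$ losses yields $C = 2048$. Dividing by $\|M^*\|_F^2$ and substituting $ER(M^*) = \|M^*\|_{nuc}^2/\|M^*\|_F^2$ — so that $\|M^*\|_{nuc}/\|M^*\|_F^2 = ER(M^*)/\|M^*\|_{nuc}$ — gives the displayed relative-error bound.

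I expect the restricted strong convexity step to be the main obstacle. Since $W'W$ has rank at most $K < N_2$ it is singular, so strong convexity fails on all of $\mathcal{M}$; the argument must genuinely exploit that $\hat M - M^*$ is (near) low rank and that $W$ acts injectively enough on such matrices — which is where both the low-effective-rank hypothesis and the size of $K$ enter — while keeping the curvature constant of the correct order $\nu K$. A secondary difficulty is the concentration bound for the score $W'W\Delta$: it is quadratic in $W$, so one has to use cancellation in $\Delta$ rather than crude submultiplicativity of the operator norm.
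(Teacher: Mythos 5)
Your high-level architecture coincides with the paper's: the proof is a direct application of Corollary 2 of \cite{negahban2011estimation} with $q=1$ and $\delta=0$, so the basic inequality, the decomposability/cone step that leaves $\|M^*\|_{nuc}$ in the bound, and the closing substitution $ER(M^*)=\|M^*\|_{nuc}^2/\|M^*\|_F^2$ all match, as does your reading of where the two failure probabilities come from. The gaps are in the two lemmas that carry all of the probabilistic content. For restricted strong convexity, the paper does not concentrate $\tfrac1K W'W$ entrywise and then invoke the cone; its Lemma \ref{rsc} lower-bounds $\tfrac{1}{2N_1K}\sum_{i,k}(\sum_j\Delta_{ij}W_{kj})^2$ for an arbitrary $\Delta$ by (i) a multiplicative Chernoff bound with a union bound over the $N_2^2$ cross-moments $\tfrac1K\sum_kW_{kj}W_{kj'}$ (the source of $N_2^2\exp(-K\nu^2/8)$), (ii) the centering inequality to pass to an independent copy $W_{kj}-W_{kj}'$, and (iii) the Khintchine lower bound, which produces exactly the variance term $\tfrac1K\sum_k\nu_{kj}(1-\nu_{kj})\ge\nu$. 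Your proposed route --- Hoeffding on the entries of $\tfrac1KW'W$ plus ``the cone makes the fluctuation negligible'' --- does not close as written: an entrywise deviation bound on $\tfrac1KW'W-\tfrac1KE[W'W]$ controls the quadratic form only through the elementwise $\ell_1$ norm of $D$, not its nuclear or Frobenius norm, so near-low-rankness of $D$ buys you nothing at that step; you would need an operator-norm deviation bound or the paper's Khintchine argument. (Your worry that uniform strong convexity is impossible when $K<N_2$ is legitimate, but it applies to the paper's Lemma \ref{rsc} as much as to your own sketch, and in any case you do not resolve it.)

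On the noise term, the paper's Lemma \ref{op} bounds the linear-in-$W$ quantity $\|(G^*-M^*)W\|_{2\to2}$, not $\|W'W(G^*-M^*)\|$: it uses Talagrand's concentration inequality to reduce to the expectation, symmetrizes with an independent copy $G^{**}$ and Rademacher signs, and then applies precisely the ``crude'' submultiplicativity you reject, together with Latala's theorem for each bounded-entry random factor --- that is where $2(\sqrt{N_1}+\sqrt{N_2})(\sqrt{N_2}+\sqrt{K})$ comes from, with the choice $t=2(\sqrt{N_2}+\sqrt{K})$ supplying the ``$+1$'' and the second failure probability. So you correctly identified what the two factors of $\lambda$ represent, but your plan to exploit cancellation in a quadratic-in-$W$ score is not the paper's route and is left entirely unexecuted. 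In short: right skeleton, but both load-bearing estimates are missing, and the substitutes you sketch for them would not go through without the specific tools (Chernoff/centering/Khintchine; Talagrand/symmetrization/Latala) that the paper actually deploys.
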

  
To prove this result, we use Corollary 2 of \cite*{negahban2011estimation}. Application of this result requires two lemmas.  The first is a lower bound on the quantity $\frac{1}{2N_{1}K}\sum_{i=1}^{N_{1}}\sum_{k=1}^{K}\left[\sum_{j=1}^{N_{2}}\Delta_{ij}W_{kj} \right]^{2}$ for an arbitrary $N_{1}\times N_{2}$ dimensional matrix $\Delta$. This is the restricted strong convexity (RSC) condition and intuitively it describes the amount of information that $Y$ and $W$ reveal about $M^{*}$.  

\begin{lemma}\label{rsc}
  Suppose the hypotheses of Proposition \ref{mainresult}. For any $\Delta \in \mathbb{R}^{N_{1}\times N_{2}}$, with probability at least $1- N_{2}^{2}\exp(-K\nu^{2}/8)$,
  \begin{equation*}
    \frac{1}{2N_{1}K}\sum_{i=1}^{N_{1}}\sum_{k=1}^{K}\left[\sum_{j=1}^{N_{2}}\Delta_{ij}W_{kj} \right]^{2} \geq \frac{\nu}{2N_{1}}\sum_{i=1}^{N_{1}}\sum_{j=1}^{N_{2}}\Delta_{ij}^{2}.
  \end{equation*}
\end{lemma}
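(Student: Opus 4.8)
The plan is to recognize the left-hand side as a quadratic form in the random design, bound its expectation from below, and then transfer this to a high-probability statement by a coordinatewise concentration argument. Expanding the square and interchanging sums,
\begin{equation*}
  \frac{1}{2N_1K}\sum_{i=1}^{N_1}\sum_{k=1}^{K}\Big(\sum_{j=1}^{N_2}\Delta_{ij}W_{kj}\Big)^{2}
  =\frac{1}{2N_1}\sum_{i=1}^{N_1}\Delta_{i\cdot}\,\widehat\Sigma\,\Delta_{i\cdot}'
  =\frac{1}{2N_1}\big\langle\widehat\Sigma,\ \Delta'\Delta\big\rangle,\qquad
  \widehat\Sigma:=\tfrac1K W'W,
\end{equation*}
where $\Delta_{i\cdot}\in\mathbb R^{N_2}$ is the $i$th row of $\Delta$, the matrix $\Delta'\Delta$ is positive semidefinite, $\operatorname{tr}(\Delta'\Delta)=\sum_{i,j}\Delta_{ij}^{2}$, and $\langle\widehat\Sigma,\Delta'\Delta\rangle=\tfrac1K\sum_{k=1}^{K}\|\Delta W_{k\cdot}'\|^{2}$ is an average of $K$ independent nonnegative random variables. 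It therefore suffices to show that $\langle\widehat\Sigma,\Delta'\Delta\rangle\ge\nu\operatorname{tr}(\Delta'\Delta)$ on an event of the stated probability.

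Next I would compute the expected curvature. Because the entries of $W$ are independent and $\{0,1\}$-valued we have $W_{kj}^{2}=W_{kj}$ and $E[W_{kj}W_{kl}]=E[W_{kj}]E[W_{kl}]$ for $j\neq l$, so, writing $p_k$ for the row vector $(E[W_{k1}],\dots,E[W_{kN_2}])$,
\begin{equation*}
  E[\widehat\Sigma]=\frac1K\sum_{k=1}^{K}E\big[W_{k\cdot}'W_{k\cdot}\big]
  =\underbrace{\frac1K\sum_{k=1}^{K}p_k'p_k}_{\succeq\,0}
  \ +\ \operatorname{diag}\!\Big(\tfrac1K\textstyle\sum_{k=1}^{K}E[W_{kj}]\big(1-E[W_{kj}]\big)\Big).
\end{equation*}
The diagonal matrix on the right is $\succeq\nu I$ by the definition of $\nu$, hence $E[\widehat\Sigma]\succeq\nu I$ and $\langle E[\widehat\Sigma],\Delta'\Delta\rangle\ge\nu\operatorname{tr}(\Delta'\Delta)$.

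Then I would control the fluctuation. For each pair $(j,l)$ the quantity $\big(\widehat\Sigma-E[\widehat\Sigma]\big)_{jl}=\tfrac1K\sum_{k=1}^{K}\big(W_{kj}W_{kl}-E[W_{kj}W_{kl}]\big)$ is an average of $K$ independent random variables each lying in an interval of length $1$; a one-sided Hoeffding bound with threshold $\nu/4$ gives $\Pr\big(\big(\widehat\Sigma-E[\widehat\Sigma]\big)_{jl}\le-\nu/4\big)\le\exp(-K\nu^{2}/8)$, and a union bound over the $N_2^{2}$ pairs shows that, with probability at least $1-N_2^{2}\exp(-K\nu^{2}/8)$, every entry of $\widehat\Sigma-E[\widehat\Sigma]$ exceeds $-\nu/4$. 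On this event one writes $\langle\widehat\Sigma,\Delta'\Delta\rangle=\langle E[\widehat\Sigma],\Delta'\Delta\rangle+\langle\widehat\Sigma-E[\widehat\Sigma],\Delta'\Delta\rangle$ and combines the curvature bound of the previous paragraph with a bound on the second term.

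The step I expect to be the main obstacle is this last one: turning a coordinatewise control of $\widehat\Sigma-E[\widehat\Sigma]$ into a lower bound on $\langle\widehat\Sigma-E[\widehat\Sigma],\Delta'\Delta\rangle$ of the right order. The crude estimate $|\langle A,\Delta'\Delta\rangle|\le\|A\|_{\max}\sum_{j,l}|(\Delta'\Delta)_{jl}|$ is too lossy, since $\sum_{j,l}|(\Delta'\Delta)_{jl}|$ may be of order $N_2\operatorname{tr}(\Delta'\Delta)$; one must use that $\Delta'\Delta\succeq0$. A cleaner route is to work with the representation $\langle\widehat\Sigma,\Delta'\Delta\rangle=\tfrac1K\sum_k\|\Delta W_{k\cdot}'\|^{2}$ directly, decomposing $W_{k\cdot}=p_k+\xi_k$ so that $\widehat\Sigma=\tfrac1K\sum_k p_k'p_k$ (positive semidefinite and discardable) $+\ \tfrac1K\sum_k(p_k'\xi_k+\xi_k'p_k)$ (mean zero, controlled by a Bernstein bound using independence of the entries of $\xi_k$) $+\ \tfrac1K\sum_k\xi_k'\xi_k$ (mean $\succeq\nu I$, a sum of independent nonnegative quadratic forms to which a one-sided concentration inequality applies); alternatively, following \cite{negahban2011estimation}, one restricts attention to the cone in which $\widehat M-M^{*}$ provably lies, where $\|\Delta\|_{nuc}\lesssim\sqrt{ER(M^{*})}\,\|\Delta\|_{F}$, so that any residual term is dominated by the $\nu\operatorname{tr}(\Delta'\Delta)$ curvature. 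I would expect the bookkeeping here --- and the precise constant that ends up multiplying $\operatorname{tr}(\Delta'\Delta)$ --- to be where the real work lies.
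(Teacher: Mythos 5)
Your reformulation is sound and your computation of the expected curvature is correct: splitting $E[W_{k\cdot}'W_{k\cdot}]=p_k'p_k+\mathrm{diag}\big(E[W_{kj}](1-E[W_{kj}])\big)$ gives $E[\hat\Sigma]\succeq\nu I$ directly, which is exactly what the paper obtains by a slightly more roundabout route (symmetrizing with an independent copy $W'$ and applying the Khintchine lower bound to isolate the diagonal contribution $\frac1K\sum_k\nu_{kj}(1-\nu_{kj})\ge\nu$). Your probabilistic ingredient is also structurally the same as the paper's: an entrywise concentration bound on the $N_2^2$ averages $\frac1K\sum_kW_{kj}W_{kj'}$ (Hoeffding for you, multiplicative Chernoff for the paper), followed by a union bound that produces the factor $N_2^2\exp(-K\nu^2/8)$.

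The genuine gap is the one you name yourself and then leave open: converting entrywise control of $\hat\Sigma-E[\hat\Sigma]$ into a lower bound on the signed quadratic form $\langle\hat\Sigma,\Delta'\Delta\rangle$ for arbitrary $\Delta$. As you correctly observe, $\lvert\langle A,\Delta'\Delta\rangle\rvert\le\lVert A\rVert_{\max}\sum_{j,l}\lvert(\Delta'\Delta)_{jl}\rvert$ can lose a factor of order $N_2$, and none of the repair strategies you sketch (Bernstein on the cross term, one-sided concentration of $\frac1K\sum_k\lVert\Delta W_{k\cdot}'\rVert^2$, or restricting $\Delta$ to the cone of \cite{negahban2011estimation} where $\lVert\Delta\rVert_{nuc}\lesssim\sqrt{ER(M^*)}\,\lVert\Delta\rVert_F$) is actually carried out, so the proposal does not establish the stated inequality. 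For what it is worth, the paper's own first inequality makes precisely the move you are worried about: it replaces each nonnegative average $\frac1K\sum_kW_{kj}W_{kj'}$ by half its expectation inside the sum $\sum_{j,j'}\Delta_{ij}\Delta_{ij'}(\cdot)$, whose coefficients are signed, and justifies this by entrywise Chernoff plus a union bound. So your concern is not a sign that you misread the problem; it is the substantive step that any complete argument must address, e.g.\ via a one-sided lower-tail bound for the average of the $K$ independent nonnegative terms $\lVert\Delta W_{k\cdot}'\rVert^2$ (whose mean is at least $\nu\lVert\Delta\rVert_F^2$), or by exploiting the cone constraint on $\Delta=\hat M-M^*$ rather than proving the bound for arbitrary $\Delta$.
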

\begin{proof}
  Let $W_{kj}'$ be an independent copy of $W_{kj}$, and define $\nu_{kj} = E[W_{kj}]$. We claim that
  \begin{align*}
    &\frac{1}{2N_{1}K}\sum_{i=1}^{N_{1}}\sum_{k=1}^{K}\left[\sum_{j=1}^{N_{2}}\Delta_{ij}W_{kj} \right]^{2} 
    = \frac{1}{2N_{1}}\sum_{i=1}^{N_{1}}\sum_{j=1}^{N_{2}}\sum_{j'=1}^{N_{2}}\Delta_{ij}\Delta_{ij'}\frac{1}{K}\sum_{k=1}^{K}W_{kj}W_{kj'} \\
    &\geq \frac{1}{2N_{1}}\sum_{i=1}^{N_{1}}\sum_{j=1}^{N_{2}}\sum_{j'=1}^{N_{2}}\Delta_{ij}\Delta_{ij'}\frac{1}{2}\left[E\left[\frac{1}{K}\sum_{k=1}^{K}W_{kj}W_{kj'}\right]\right]
    =  \frac{1}{4N_{1}K}\sum_{i=1}^{N_{1}}\sum_{k=1}^{K}E\left[\sum_{j=1}^{N_{2}}\Delta_{ij}W_{kj}\right]^{2}  \\
    &\geq  \frac{1}{4N_{1}K}\sum_{i=1}^{N_{1}}\sum_{k=1}^{K}E\left[\sum_{j=1}^{N_{2}}\Delta_{ij}\left(W_{kj} -W_{kj}'\right)\right]^{2} 
    \geq  \frac{1}{4N_{1}K}\sum_{i=1}^{N_{1}}\sum_{k=1}^{K}E\left[\sum_{j=1}^{N_{2}}\Delta_{ij}^{2}\mathbbm{1}_{W_{kj} \neq W_{kj}'}\right] \\
    &= \frac{1}{2N_{1}}\sum_{i=1}^{N_{1}}\sum_{j=1}^{N_{2}}\Delta_{ij}^{2} \frac{1}{K}\sum_{k=1}^{K}\nu_{kj}(1-\nu_{kj})
    \geq \frac{\nu}{2N_{1}}\sum_{i=1}^{N_{1}}\sum_{j=1}^{N_{2}}\Delta_{ij}^{2}.
  \end{align*} 
  Note that the expectations are over $W$ and $W'$, the only random quantities above. The first inequality holds with probability at least $1- N_{2}^{2}\exp\left(-K\nu^{2}/8\right)$ by the multiplicative Chernoff bound and the union bound. Use of the Chernoff bounds draws on independence of the entries of $W$. The second inequality is the centering inequality \citep[see][Lemma 2.6.8]{vershynin2018high}. The third inequality is the Khintchine lower bound (conditional on the event $W_{kj} \neq W_{kj}'$). 
\end{proof}

The second lemma is an upper bound on $||(G^{*}-M^{*})W||_{2\to2}$ which is used to inform the choice of $\lambda$. The operator $||\cdot||_{2\to2}$ refers to the spectral norm of a matrix (largest spectral value). Intuitively, this is a bound on noise generated by variation of the realized network links $G^{*}$ around their expectation $M^{*}$. 

\begin{lemma}\label{op}
  Suppose the hypotheses of Proposition \ref{mainresult}. For any $t > 0$, with probability at least $1-\exp(-t^{2}/2)$,
  \begin{equation*}
    ||(G^{*}-M^{*})W||_{2\to2} \leq 2\left(\sqrt{N_{1}}+\sqrt{N_{2}}\right)\left(\sqrt{N_{2}} +\sqrt{K}\right) + t.
  \end{equation*}
\end{lemma}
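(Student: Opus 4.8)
The plan is to bound the spectral norm of the product $(G^{*}-M^{*})W$ by reducing each factor to a centered, bounded random matrix and then multiplying high‑probability spectral‑norm bounds for the two. First I would condition on the agent fixed effects $\{\alpha_{i}\}$: under \eqref{model} the entries $g_{ij}^{*}-m_{ij}^{*}$ of $G^{*}-M^{*}$ are then mutually independent, mean zero, and valued in $[-1,1]$, hence each is sub‑Gaussian with a universal variance proxy, and by the independence hypothesis $G^{*}-M^{*}$ is independent of $W$. Because $W$ has non‑zero mean it is also convenient to write $W=E[W]+(W-E[W])$ and work with the centered matrix $W-E[W]$, whose entries are independent, mean zero, and bounded by $1$ in absolute value.

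Second, I would invoke the standard concentration inequality for the spectral norm of a random matrix with independent mean‑zero sub‑Gaussian entries. The expectation bound $E\,\|G^{*}-M^{*}\|_{2\to 2}\le\sqrt{N_{1}}+\sqrt{N_{2}}$ (up to the universal constant that the leading factor $2$ absorbs) follows from a Gaussian comparison inequality of Chevet/Gordon type or a $1/4$‑net argument, and, since $A\mapsto\|A\|_{2\to 2}$ is $1$‑Lipschitz in the Frobenius norm, Gaussian Lipschitz concentration (after a symmetrization/Gaussianization step) gives $\|G^{*}-M^{*}\|_{2\to 2}\le\sqrt{N_{1}}+\sqrt{N_{2}}+s$ with probability at least $1-e^{-s^{2}/2}$. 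The identical argument gives $\|W-E[W]\|_{2\to 2}\le\sqrt{N_{2}}+\sqrt{K}+s'$ with probability at least $1-e^{-s'^{2}/2}$. Submultiplicativity, $\|(G^{*}-M^{*})W\|_{2\to 2}\le\|G^{*}-M^{*}\|_{2\to 2}\,\|W\|_{2\to 2}$, a union bound over the two events, and the choice $s=s'$---with the resulting cross terms and the contribution of $E[W]$ absorbed into the leading factor $2$ and the additive $t$---then yield the stated bound. Taking $t$ of order $(\sqrt{N_{2}}+\sqrt{K})^{1/2}$ produces the tail $\exp(-(\sqrt{N_{2}}+\sqrt{K})/2)$ of Proposition \ref{mainresult}, and the ``$+1$'' in the recommended penalty \eqref{penalty} is precisely the slack needed to dominate this $t$; the resulting operator‑norm control is then what is fed into Corollary 2 of \cite{negahban2011estimation}.

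The step I expect to be the main obstacle is getting the two spectral‑norm bounds simultaneously with the sharp leading constants (coefficient $1$ on each of $\sqrt{N_{1}},\sqrt{N_{2}},\sqrt{K}$) and the clean sub‑Gaussian tail $1-e^{-t^{2}/2}$: a direct matrix Bernstein or bounded‑differences estimate gives the right order but loses logarithmic factors or yields a far weaker tail of the form $\exp(-ct^{2}/(N_{1}N_{2}K))$, so one genuinely needs the Gaussian‑comparison‑plus‑Gaussian‑concentration route, which in turn requires handling the non‑Gaussian (bounded Bernoulli‑type) nature of both $G^{*}-M^{*}$ and $W$. A secondary point requiring care is the bookkeeping that keeps the tail parameter additive after the two bounds are multiplied, and the clean treatment of the non‑zero mean of $W$.
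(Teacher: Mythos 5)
Your toolkit is the right one---concentration of the spectral norm around its expectation, expected-operator-norm bounds of Latala/Gordon type for matrices with independent centered entries, and submultiplicativity---and these are exactly the ingredients in the paper's proof. The argument breaks, however, at the step where you dispose of $E[W]$. You split $W = E[W] + (W - E[W])$ and assert that the contribution of $E[W]$ is ``absorbed into the leading factor $2$ and the additive $t$.'' It is not. The entries of $E[W]$ are Bernoulli means, not centered quantities, so $\|E[W]\|_{2\to 2}$ is generically of order $\sqrt{KN_2}$ (if every $W_{kj}$ is Bernoulli$(1/2)$, then $\|E[W]\|_{2\to2}=\tfrac12\sqrt{KN_2}$ exactly), and submultiplicativity charges you $(\sqrt{N_1}+\sqrt{N_2})\cdot\tfrac12\sqrt{KN_2}$ for that piece. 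With $N_1=N_2=n$ and $K=\sqrt n$ this is of order $n^{5/4}$ against a target of order $n$; no bookkeeping absorbs it, and the gap only widens as $K$ grows. The point of the lemma is precisely that the centering of $G^*-M^*$ must be played against the uncentered $W$ \emph{inside the product}; your decomposition discards that interaction before it can be used.

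The paper's proof is organized to avoid exactly this. It applies concentration (Talagrand) once, to the product $\|(G^*-M^*)W\|_{2\to2}$ itself, giving the clean additive $t$ with tail $e^{-t^2/2}$. It then replaces $M^*=E[G^{**}]$ by an independent copy $G^{**}$ of $G^*$ via Jensen's inequality, and uses the resulting symmetry to replace $W$ by its Rademacher randomization $\xi\circ W$, whose entries \emph{are} mean zero. Only at that point does it invoke submultiplicativity and Latala's theorem, which then yields $2(\sqrt{N_1}+\sqrt{N_2})$ and $\sqrt{N_2}+\sqrt{K}$ for the two (now centered) factors. By contrast, your route of proving two separate high-probability bounds and multiplying them also changes the form of the result: a union bound gives failure probability $1-2e^{-s^2/2}$ and a deviation $s(\sqrt{N_1}+\sqrt{N_2})+s(\sqrt{N_2}+\sqrt{K})+s^2$ that is multiplicative in the dimension factors rather than a bare $+t$. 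So your diagnosis of the main obstacle (sharp constants in the single-factor bounds) is aimed at the wrong place: those constants are swallowed by the leading $2$ in both arguments, whereas the uncentered mean of $W$ is fatal to your decomposition and is the one step that genuinely requires the symmetrization device.
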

\begin{proof}
  Let $G^{**}$ be an independent copy of $G^{*}$ and $\xi$ be a $N_{2} \times K$ dimensional matrix of independent Rademacher random variables. We claim that
  \begin{align*}
    ||(G^{*}-M^{*})W||_{2\to2} &\leq E||(G^{*}-M^{*})W||_{2\to2} + t  \leq E||(G^{*}-G^{**})W||_{2\to2} + t  \\
    &= E||(G^{*}-G^{**})\left(\xi \circ W\right)||_{2\to2} + t \leq E||G^{*}-G^{**}||_{2\to2}\times ||\xi \circ W||_{2\to2} + t \\
    &\leq 2\left(\sqrt{N_{1}}+ \sqrt{N_{2}}\right)\left(\sqrt{N_{2}} + \sqrt{K}\right) + t,
  \end{align*}
  where $\circ$ refers to the Hadamard product. The expectations are over $\xi, W, G^*, G^{**}$, the only random quantities above, as $M^*$ is fixed. The first inequality is with probability at least $1-\exp\left(-t^{2}/2\right)$ and due to Talagrand \citep[see][Theorem 6.10]{boucheron2013concentration}. This uses independence of the entries of $W$. The second inequality is due to Jensen \citep[see][Section 2.3.2]{tao2012topics}. The third inequality is due to submultiplicity of $||\cdot||_{2\to2}$. The last inequality is due to Latala \citep[see][Theorem 2.3.8]{tao2012topics}. 
\end{proof}

\begin{proof}[Proof of Proposition \ref{mainresult}]
  We apply Corollary 2 of \cite*{negahban2011estimation} with $q = 1$ and $\delta = 0$. Lemma \ref{rsc} verifies restricted strong convexity, since $\nu>0$ by assumption. Lemma \ref{op} bounds the quantity $2|| \mathfrak{X}^*(\overrightarrow{\varepsilon})||$ in the statement of their corollary. In using this lemma, we pick $t = 2(\sqrt{N_2} + \sqrt{K})$. Note that \cite*{negahban2011estimation} use a different scaling for their objective function (see their equation 9) leading to nominal differences in notation. Our scaling was chosen to instead follow \cite{ji2009accelerated}. 
\end{proof}

Proposition \ref{mainresult} admits the following asymptotic result that we refer to in Section \ref{sMSE}.

\begin{corollary}
  Suppose the hypotheses of Proposition \ref{mainresult}. Consider a sequence of models such that $K/\log(N_{2}) \to \infty$. If $\nu$, $||M^{*}||_{F}/\sqrt{N_{1}N_{2}}$, and $N_{1}/N_{2}$ are asymptotically bounded away from $0$ and $||M^{*}||_{nuc}/N_{2}$ is asymptotically bounded from above along this sequence, then with probability approaching one,
  \begin{equation*}
    \frac{||\hat{M}-M^{*}||_{F}^{2}}{N_{1}N_{2}} \leq C \times \frac{ER(M^{*})}{K}
  \end{equation*}
  where $C = 2048 \times \nu^{-1} \times ||M^{*}||_{F}^{2}/N_{1}N_{2} \times \lambda/||M^{*}||_{nuc}$.
\end{corollary}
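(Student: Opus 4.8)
The plan is to obtain the corollary as a direct rescaling of Proposition \ref{mainresult}, together with a short check that the exceptional probability in that proposition vanishes along the stated sequence of models. No new probabilistic machinery is needed beyond what Proposition \ref{mainresult} already provides.

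First I would apply Proposition \ref{mainresult}. Its hypotheses are assumed, so on the event it describes, squaring both sides of its displayed bound gives
\begin{equation*}
  \frac{||\hat M - M^*||_F^2}{||M^*||_F^2} \le \frac{2048\,\lambda\, ER(M^*)}{\nu\, ||M^*||_{nuc}\, K}.
\end{equation*}
Multiplying through by $||M^*||_F^2/(N_1 N_2)$ and collecting the factors that do not involve $ER(M^*)/K$ yields $||\hat M - M^*||_F^2/(N_1 N_2) \le C\cdot ER(M^*)/K$ with $C = 2048\,\nu^{-1}\,(||M^*||_F^2/N_1 N_2)\,(\lambda/||M^*||_{nuc})$, which is exactly the claimed inequality.

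Second, I would verify that the failure probability in Proposition \ref{mainresult}, namely $N_2^2\exp(-K\nu^2/8)+\exp(-(\sqrt{N_2}+\sqrt{K})/2)$, tends to zero. Since the standing convention is $K<N_1\le N_2$, the hypothesis $K/\log N_2\to\infty$ forces $N_2\to\infty$ (a bounded $N_2$ would bound $K$ and hence the ratio), so the second term vanishes. For the first term, write it as $\exp(2\log N_2 - K\nu^2/8)$; because $\nu$ is bounded below by some $\nu_0>0$ along the sequence, the exponent is at most $\log N_2\,\big(2 - \nu_0^2 K/(8\log N_2)\big)$, which diverges to $-\infty$ since $K/\log N_2\to\infty$ and $\log N_2\to\infty$. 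Hence the event of Proposition \ref{mainresult} holds with probability approaching one, completing the proof.

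Finally, it is worth recording why $C$ behaves like a constant, which is the content of the remark referenced in Section \ref{sMSE}. The entries of $M^*$ lie in $[0,1]$, so $||M^*||_F^2/(N_1 N_2)\le 1$; with $\lambda$ as in \eqref{penalty} and $N_1/N_2$ bounded away from zero, $\lambda$ is of order $N_2$ (using $K\le N_2$), while $||M^*||_{nuc}\ge ||M^*||_F$, which by hypothesis is of order $\sqrt{N_1 N_2}$ and hence also of order $N_2$; together with $\nu\ge\nu_0$ this makes $C$ bounded along the sequence. There is no substantive obstacle here — the corollary is essentially a change of normalization of Proposition \ref{mainresult}; the only step needing any attention is the probability bookkeeping, in particular noticing that $K/\log N_2\to\infty$ already delivers $N_2\to\infty$, so that both the polynomial prefactor $N_2^2$ and the second exponential are controlled.
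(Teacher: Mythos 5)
Your proof is correct and follows the argument the paper intends: the corollary is stated without an explicit proof, and the natural derivation is exactly what you give — square the bound of Proposition \ref{mainresult}, rescale by $||M^*||_F^2/(N_1N_2)$, and check that $N_2^2\exp(-K\nu^2/8)+\exp(-(\sqrt{N_2}+\sqrt{K})/2)\to 0$ under $K/\log N_2\to\infty$ and $\nu$ bounded away from zero. Your closing remark on why $C$ is bounded is a useful addition consistent with the discussion in Section \ref{sMSE}, but it is not needed for the statement itself.
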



\end{document}